\pgfplotsset{compat=1.18}
\newtheorem{assumption}{Assumption}
\newcommand{\re}{\mathtt{r}}
\newcommand{\bl}{\mathtt{B}}
\newcommand{\cf}{c_{\mathrm{F}}}
\newcommand{\ci}{c_{\mathrm{I}}}
\newcommand{\setInitial}{\overline{\mathcal{I}}}
\newcommand{\setFinal}{\overline{\mathcal{F}}}
\newtheorem{theorem}{Theorem}
\newtheorem{lemma}{Lemma}
\newtheorem{corollary}{Corollary}
\newtheorem{remark}{Remark}
\begin{document}

\begin{frontmatter}

%% Title, authors and addresses

%% use the tnoteref command within \title for footnotes;
%% use the tnotetext command for theassociated footnote;
%% use the fnref command within \author or \affiliation for footnotes;
%% use the fntext command for theassociated footnote;
%% use the corref command within \author for corresponding author footnotes;
%% use the cortext command for theassociated footnote;
%% use the ead command for the email address,
%% and the form \ead[url] for the home page:
%% \title{Title\tnoteref{label1}}
%% \tnotetext[label1]{}
%% \author{Name\corref{cor1}\fnref{label2}}
%% \ead{email address}
%% \ead[url]{home page}
%% \fntext[label2]{}
%% \cortext[cor1]{}
%% \affiliation{organization={},
%%             addressline={},
%%             city={},
%%             postcode={},
%%             state={},
%%             country={}}
%% \fntext[label3]{}

\title{Confidentiality in a Card-Based Protocol Under Repeated Biased Shuffles}

%% use optional labels to link authors explicitly to addresses:
%% \author[label1,label2]{}
%% \affiliation[label1]{organization={},
%%             addressline={},
%%             city={},
%%             postcode={},
%%             state={},
%%             country={}}
%%
%% \affiliation[label2]{organization={},
%%             addressline={},
%%             city={},
%%             postcode={},
%%             state={},
%%             country={}}

\author[inst]{Do Hyun Kim} %% Author name
\ead{am23105@shibaura-it.ac.jp}

\author[inst]{Ahmet Cetinkaya} %% Author name
\ead{ahmet@shibaura-it.ac.jp}

%% Author affiliation
\affiliation[inst]{organization={College of Engineering, Shibaura Institute of Technology},%Department and Organization
            addressline={Toyosu 3-7-5}, 
            city={Koto City},
            postcode={135-8448}, 
            state={Tokyo},
            country={Japan}}

%% Abstract
\begin{abstract}
%% Text of abstract
In this paper, we provide a probabilistic analysis of the confidentiality in a card-based protocol. We focus on Bert den Boer's original Five Card Trick to develop our approach. Five Card Trick was formulated as a secure two-party computation method, where two players use colored cards with identical backs to calculate the logical AND operation on the bits that they choose. In this method, the players first arrange the cards privately, and then shuffle them through a random cut. Finally, they reveal the shuffled arrangement to determine the result of the operation. An unbiased random cut is essential to prevent players from exposing their chosen bits to each other. However, players typically choose to move cards within the deck even though not moving any cards should be equally likely. This unconscious behavior results in a biased, nonuniform shuffling-distribution in the sense that some arrangements of cards are slightly more probable after the cut. Such a nonuniform distribution creates an opportunity for a malicious player to gain advantage in guessing the other player's choice. We provide the conditional probabilities of such guesses as a way to quantify the information leakage. Furthermore, we utilize the eigenstructure of a Markov chain to derive tight bounds on the number of times the biased random cuts must be repeated to reduce the leakage to an acceptable level. We also discuss the generalization of our approach to the setting where shuffling is conducted by a malicious player.
\end{abstract}

%%Graphical abstract
%\begin{graphicalabstract}
%\includegraphics{grabs}
%\end{graphicalabstract}

%%Research highlights
%\begin{highlights}
%\item Research highlight 1
%\item Research highlight 2
%\end{highlights}

%% Keywords
\begin{keyword}

Multi-Party computation\sep Card-based cryptography\sep Information leakage\sep Confidentiality\sep Probabilistic analysis\sep Markov chains
%% keywords here, in the form: keyword \sep keyword

%% PACS codes here, in the form: \PACS code \sep code

%% MSC codes here, in the form: \MSC code \sep code
%% or \MSC[2008] code \sep code (2000 is the default)

\end{keyword}

\end{frontmatter}

\section{Introduction}
As the value and utility of information continue to grow, ensuring the confidentiality of data has become increasingly important. Secure Multi-Party Computation (SMPC) is a key field in modern cryptography, enabling computation on inputs without revealing any information about them. Ideally, it should be impossible to deduce the inputs from the outputs~\cite{du2001secure}. However, performing SMPC can be challenging due to limited access to secure tools and the potential for malicious attacks on computational machines. An interesting idea discussed by Bert den Boer in \cite{den1990more} alleviates these issues in multi-party computation by using physical playing cards instead of relying on digital computation. This is the so-called Five Card Trick, which allows two players to calculate the result of the logical AND operation on the bits they each choose without revealing
their choice to the other player. The trick relies on private arrangement of cards by the two players and shuffling through
a random cut (bisection cut).

After the introduction of the original Five Card Trick, much research has been conducted on card-based protocols similar to the Five Card Trick, with a focus on reducing the number of the cards and shuffles required. Specifically, Mizuki and Sone~\cite{mizuki2009six} proposed a card-based protocol for computing the XOR operation, and later, Mizuki et al.~\cite{mizuki2012five} showed that AND operation could be performed using four cards instead of five. The paper~\cite{mizuki2014formalization} introduced a formal computational model for card-based protocols. Morever, Kastner et al.~\cite{kastner2017minimum} proved the minimum number of cards necessary for practical and secure card-based protocols implementing the AND and COPY operations. Beyond computational improvements, research on card-based protocols has explored operational concepts. In particular, a novel type of operation (referred to as a  ``private operation''~\cite{manabe2022card,morooka2023malicious,nakai2016efficient}) was introduced in the card-based protocol, allowing players to perform their actions without being observed by others. This increased the flexibility of card-based protocols by removing constraints on the number of cards needed for certain computations. However, it also brought additional vulnerabilities to card-based protocols that employ private operations. To address these concerns, Manabe et al.~\cite{manabe2022card} proposed methods to reinforce the confidentiality of private operations such as introducing a third party to monitor malicious behavior or using physical tools like envelopes to detect deviations from the protocol. 

In addition to reducing the number of cards, some studies have focused minimizing the number of shuffles required. In the original operation of the Five Card Trick, a shuffle method known as the \emph{random cut} was used. However, as various unique card-based protocols have been introduced, different shuffle techniques have also been adopted. These changes were generally aimed at reducing the number of shuffle iterations required. For example, Shinagawa et al.~\cite{shinagawa2015multi} introduced regular polygon cards to calculate the result of a function of non-binary inputs, while reducing the number of shuffles needed in card-based protocol. Furthermore, the paper~\cite{shinagawa2021single} proved that general secure computation can be achieved with only a single shuffle using their proposed protocols. Also, Honda and Shinagawa \cite{honda2024efficient} presented methods for computing AND and COPY operations efficiently in terms of both cards and the number of cuts.

To the best of our knowledge, previous papers do not handle the cases where shuffling is biased in the sense that some arrangement of cards are slightly more probable after shuffling. In this paper, we consider the scenario, where a player unintentionally introduces this bias, by being more likely to move some cards even though the choice of not shuffling should be equally likely. We use probability theory as our mathematical tool for analysis. To be more specific, we use conditional probability to show the potential confidentiality issues due to biased shuffling. Then we investigate what happens if the nonuniform shuffling process is repeated. To characterize repeated nonuniform shuffling, we use a Markov chain model. In our particular scenario, we observe that the Five Card Trick method does not entirely ensure confidentiality. We show that under certain cases, a malicious player can correctly guess the other player's input. On the other hand, if the nonuniform shuffles are repeated, confidentiality may improve. In this paper, we calculate a tight lower bound on the number of shuffles required to ensure a desired level of confidentiality. While our paper is concerned with confidentiality issues caused by the tendencies of biased shuffling, we also provide a discussion on a more general setting that allows us to handle the cases where one player is a malicious player and tries to make a certain order of cards more likely.

The use of Markov chains for modeling card shuffling has been considered previously by works such as \cite{chen2008cutoffrandomizedriffleshuffle,diaconis1996cutoff}, but with a theme different from card-based cryptography. Previous works mainly explore the so-called mixing property of Markov chains and the cut-off phenomenon, and they show that a surprisingly small number of ``riffle shuffles'' are sufficient to ensure that the order of cards are effectively randomized. Similar cut-off phenomena also exist in more a general setting of Markov processes \cite{chen2008cutoff}. Differently from past work, in this paper, we consider a confidentiality problem in a card-based protocol and explore random cuts instead of riffle shuffles. Furthermore, instead of assessing whether the card order is randomized, we analyze whether a player's bit-choice can be guessed by the other player after looking at the final order of cards. Our analysis technique also differs from those in \cite{chen2008cutoff,chen2008cutoffrandomizedriffleshuffle,diaconis1996cutoff} in that we do not directly investigate the mixing property of a Markov chain. Instead, we explore how a certain conditional probability related to information leakage evolves with respect to the number of shuffles.

We note that security aspects of the random cut has also been considered in the past work. Standard random cut is rather a simple method of shuffling the cards compared with the complicated implementation such as riffle shuffle. Therefore, because of its simplicity, there are chances that some players might track the number of the cards that moved \cite{ueda2020secure}. To mitigate this, Ueda et al.~\cite{ueda2016implement} proposed an alternative and secure implementation of a random cut. They pointed out that an aligned deck of cards and moving cards from bottom to the top when executing the random cut operations are more secure against the possible information leakage. Moreover, they showed that Hindu shuffle (Hindu cut) is an effective method, since it makes it much more difficult for the players to track the number of the cards moved in the operations. In this paper we focus on shuffling through a standard random cut; however, we believe that bias in other shuffling methods such as Hindu shuffling may be investigated in a similar fashion by using the conditional probability analysis that we present.

We remark that our analysis approach is applicable to other card-based protocols that use random cuts. In all card-based protocols random cuts may introduce bias, even though the number of cards may be different from five and the protocol may require extra operations. We decided to focus on the original five card trick protocol, because it is a standard in the literature and many other protocols are based on it.  

The organization of the remainder of this paper is as follows. In Section~\ref{Sec_2}, we summarize the original Five Card Trick and define notations for analysis. In Section~\ref{Sec_3}, we discuss confidentiality of the Five Card Trick and information leakage under biased shuffling. In Section~\ref{Sec_4}, we introduce a Markov chain model to characterize repeated shuffling and analyze the effects of repeated shuffles in reducing the information leakage. In Section~\ref{Sec_Generalization}, we discuss how we can adapt our analysis approach to a more general setting where there may be malicious shuffling. Finally, in Section~\ref{Sec_Conclusion}, we conclude our paper.

\section{Background and Notations}\label{Sec_2}
In this section, we provide a summary of the original Five Card Trick and introduce our notation for its analysis.
\subsection{The Five Card Trick}

In the Five Card Trick \cite{den1990more}, den Boer provides a way to securely compute the logical AND operation with five cards. There are two parties that participate in this calculation. In this paper, we identify these two parties as Alice and Bob. We consider the scenario that Alice and Bob want to calculate $a \land b$ where $a\in\{0,1\}$ is chosen by Alice and $b \in \{0,1\}$ is chosen by Bob. To do this calculation with privacy, the Five Card Trick uses three black cards and two red cards all with identical backside. In this paper, black cards and red cards will be represented with $\bl$ and $\re$, respectively.

To conduct the Five Card Trick, Alice and Bob are each given one pair of a black card and a red card. There is one extra Black card left to be used later. As a first step, Alice and Bob decide the order of their cards based on their bits as follows.
\begin{itemize}
\begin{item}[$\star$]
For Alice, $\re\bl$ means $a=1$, $\bl\re$ means $a=0$.
\end{item}
\begin{item}[$\star$]
For Bob, $\bl\re$ means $b=1$, $\re\bl$ means $b=0$.
\end{item}
\end{itemize}
After they make their decisions, they lay their cards facing down, in following the order:
Alice's cards -- the extra Black card -- Bob's cards. Then the cards are "shuffled" through a random cut. Finally, after shuffling, the final arrangement of the cards is revealed. If there are no three black cards adjacent to each other, and no two red cards adjacent to each other, then this means that the result of \( a \land b \) is 0. Otherwise it must be 1. 

The important privacy aspect of the Five Card Trick is that if one party chooses $0$ and the other party chooses $1$, then the party that chooses $0$ cannot determine the other party's choice by looking at the final arrangement of cards. 
For example, let's assume that Alice chooses $a=1$ with the resulting card order $\re\bl$,  and moreover, Bob chooses $b=0$ with the resulting card order $\re\bl$. In this case, the initial arrangement of cards will be $\re\bl\bl\re\bl$. 
After shuffling, if they have the final arrangement $\bl\re \bl\re\bl$, then this means that the result of the AND operation is $a\land b=0$.  In this case, while Alice can know from the result that Bob chose $b=0$, Bob cannot know what Alice chose and it can be either $a=1$ or $a=0$. This is because just by looking at the final arrangement $\bl\re\bl\re\bl$, Bob cannot guess whether the initial arrangement was $\re\bl\bl\re\bl$ or $\bl\re\bl\re\bl$, since both arrangements can result in the obtained final arrangement after a random cut. This example is illustrated in Fig.~\ref{fig1}.

\begin{figure}[t]
    \centering
    \includegraphics[width=0.8\linewidth]{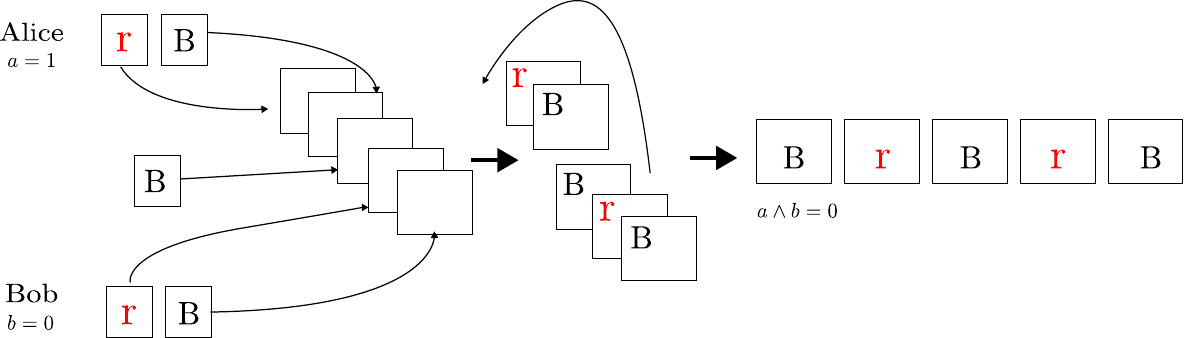}
    \caption{The process of the Five Card Trick}
    \label{The process of the Five Card Trick}
    \label{fig1}
\end{figure}

\subsection{Preliminary Notations}
In this paper, we explore confidentiality aspects of the Five Card Trick. To facilitate our analysis, we introduce several notations.  

\textbf{Sets of possible initial and final card arrangements:} We use $\mathcal{I}$ to denote the set of all possible initial arrangements, and $\mathcal{F}$ to denote the set of all possible final arrangements. Specifically,
\begin{align}
    \quad\mathcal{I} &= \{\re\bl\bl\re\bl, \bl\re\bl\re\bl, \bl\re\bl\bl\re, \re\bl\bl\bl\re\}, \\
    \quad\mathcal{F} &= \{\re\bl\bl\re\bl, \bl\re\bl\re\bl, \bl\re\bl\bl\re, \re\bl\bl\bl\re, \bl\bl\re\bl\re, \nonumber\\
    &\quad\quad \re\bl\re\bl\bl, \bl\bl\bl\re\re, \bl\bl\re\re\bl, \bl\re\re\bl\bl, \re\re\bl\bl\bl\}.
\end{align} {\vskip 3pt}

\textbf{Initial and final card arrangements:} The initial and the final arrangements of cards are defined respectively as random variables $\ci \colon \Omega \to \mathcal I$ and $\cf \colon \Omega \to \mathcal F$, where $\Omega$ is the set of outcomes in a probability space with probability measure $\mathbb{P}$. {\vskip 3pt}

\textbf{Shuffling order}: To model the number of cards moved from top to bottom after the random cut, we use the random variable $s \colon \Omega \to \{0,1,2,3,4\}$.  For instance, if a player cuts two cards from the top of the deck and moves them to the bottom of the deck, $s$ will be 2. Moreover, $s=0$ means that a player doesn't move any cards from the top. We also call $s$ the shuffling order. In den Boer's Five Card Trick, it is assumed that $s$ is uniformly distributed so that
\begin{align}
\mathbb{P}(s=i)=1/5, \quad i \in \{0,1,2,3,4\}. \label{s-equal-prob}
\end{align}
In other words, in the original Five Card Trick, all final arrangements are equally likely. Later, we will analyze the case where these probabilities are not uniform. {\vskip 3pt}

\textbf{Relationship between initial and final card arrangements:} To facilitate the analysis, we define $f \colon \mathcal{I} \times \{0,1,2,3,4\} \to \mathcal{F}$ as the function that determines the final arrangement of cards given an initial arrangement and the shuffling order. Given an initial arrangement $\mathtt{abcde}\in \mathcal{I}$, we have 
\begin{align*}
& f(\mathtt{abcde},0)=\mathtt{abcde},\quad f(\mathtt{abcde},1)=\mathtt{eabcd}, \\
& f(\mathtt{abcde},2)=\mathtt{deabc},\quad  f(\mathtt{abcde},3)=\mathtt{cdeab}, \\
& f(\mathtt{abcde},4)=\mathtt{bcdea}.
\end{align*}
For instance, $f(\re\bl\bl\re\bl,2)=\re\bl\re\bl\bl$.

As a result of all these notations that we defined, we have
\begin{align}
\cf(\omega)=f(\ci(\omega), s(\omega)) \label{def-cf}
\end{align}
for any outcome $\omega \in \Omega$. In the remainder of the paper, we omit specifying the outcome $\omega$, and write $\cf = f(\ci, s)$. 

\section{Confidentiality in the Five Card Trick Under a Biased Shuffle}\label{Sec_3}
In this section, we investigate the scenario in which a malicious player of the Five Card Trick gains an advantage in guessing the other player's choice by using a prior knowledge related to the probability distribution of the final arrangements of the cards after shuffling.

Shuffling through a random cut is a fundamental operation in card-based games and card-based protocols, typically assumed to enhance fairness. However, when players perform a random cut, they unconsciously force themselves to move some cards even though not moving any cards must be equally likely as moving $i>0$ number cards. This behavior is guided by the belief that it protects the player's confidentiality. In this section, we reveal that such unconscious behavior, when influenced by bias, may in fact lead to unintended information leakage and compromise the security of the protocol.

\subsection{The Five Card Trick Under a Biased Shuffle}
We now take a look at the Five Card Trick under the influence of biased shuffles where the bias is unintentionally introduced by one of the players (later in Section~\ref{Sec_Generalization} we will generalize this). Suppose that two players, Alice and Bob, are using the Five Card Trick protocol to calculate the result of AND operation on their selected bits, but Bob wants to guess the Alice's choice. For perfect confidentiality, the Five Card Trick requires a random  cut where the cut index is \emph{uniformly} distributed over $\{0,1,2,3,4\}$. If the cut index is $0$, then no cut is performed; when the cut index is $i>0$, then $i$ number of cards are moved from top to bottom. Bob knows that players are more likely to do a random cut with a non-zero cut index. Assume that Alice chooses the card order $\re\bl~(a=1)$ and Bob chooses the card order $\re\bl ~(b=0)$. The initial arrangement of the cards is $\re\bl\bl\re\bl$. Then, Bob asks Alice to shuffle the cards, and assume that we have $\re\bl\re\bl\bl$ as the final arrangement. Since players are cutting the deck of five cards (see Fig.~\ref{Exam_Card_Cuts}), there are five distinct possible outcomes. Therefore, each possible final arrangement should occur with probability $\frac{1}{5}$ under uniform randomness.
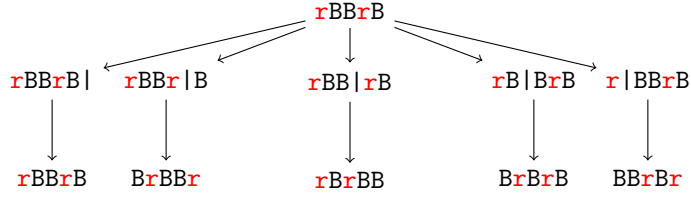
\begin{figure}[t]
\centering
\begin{tikzpicture}[node distance=1.3cm, every node/.style={align=center}]
    
    % Root Node (Original String)
    \node (start) {\texttt{\textcolor{red}{r}BB\textcolor{red}{r}B}};
    
    % First Level (Possible Cuts, Spread Wider)
    \node (cut1) [below left of=start, xshift=-3cm] {\texttt{\textcolor{red}{r}BB\textcolor{red}{r}B|}};
    \node (cut2) [below left of=start, xshift=-1.5cm] {\texttt{\textcolor{red}{r}BB\textcolor{red}{r}|B}};
    \node (cut3) [below of=start, yshift=0.35cm] {\texttt{\textcolor{red}{r}BB|\textcolor{red}{r}B}};
    \node (cut4) [below right of=start, xshift=1.5cm] {\texttt{\textcolor{red}{r}B|B\textcolor{red}{r}B}};
    \node (cut5) [below right of=start, xshift=3cm] {\texttt{\textcolor{red}{r}|BB\textcolor{red}{r}B}};

    % Second Level (Resulting Sequences with Probabilities)
    \node (res1) [below of=cut1] {\texttt{\textcolor{red}{r}BB\textcolor{red}{r}B}};
    \node (res2) [below of=cut2] {\texttt{B\textcolor{red}{r}BB\textcolor{red}{r}}};
    \node (res3) [below of=cut3] {\texttt{\textcolor{red}{r}B\textcolor{red}{r}BB}};
    \node (res4) [below of=cut4] {\texttt{B\textcolor{red}{r}B\textcolor{red}{r}B}};
    \node (res5) [below of=cut5] {\texttt{BB\textcolor{red}{r}B\textcolor{red}{r}}};

    % Arrows Connecting Nodes
    \draw[->] (start) -- (cut1);
    \draw[->] (start) -- (cut2);
    \draw[->] (start) -- (cut3);
    \draw[->] (start) -- (cut4);
    \draw[->] (start) -- (cut5);

    \draw[->] (cut1) -- (res1);
    \draw[->] (cut2) -- (res2);
    \draw[->] (cut3) -- (res3);
    \draw[->] (cut4) -- (res4);
    \draw[->] (cut5) -- (res5);
\end{tikzpicture}
\caption{Example of random cuts}\label{Exam_Card_Cuts}
\end{figure}

However, with Bob knowing the other player's behavioral characteristics, the probability of the first case in the Fig.~\ref{Exam_Card_Cuts} has a likelihood value slightly lower than $\frac{1}{5}$ while the other cases have slightly higher probability than $\frac{1}{5}$. This type of information leakage is difficult to detect and occurs naturally, as it does not require a malicious player to take a direct action in the process of the Five Card Trick. In what follows, we analyze how a malicious player can gain an advantage in guessing the other player's choice through a probabilistic approach.

\subsection{Analysis of the Biased Setting}
To describe how Bob can gain an advantage in guessing Alice's choice, we rely on a probabilistic analysis.   We begin by stating two assumptions that serve as a basis for our analysis.

\begin{assumption}\label{Assumption_equal_probability} Alice chooses either bit $0$ or $1$ with equal probability but Bob always chooses $0$, that is, 
\begin{align}
   &\mathbb{P}(a=1)=\mathbb{P}(a=0)=\frac{1}{2},\\
   &\mathbb{P}(b=0)=1.
\end{align}
\end{assumption}

\begin{assumption}\label{Assumption_nonuniform_prob} The shuffling order satisfies
 \begin{align}
      \quad\quad\mathbb{P}(s=0)& =\frac{1}{5}-\varepsilon,\label{s=0}\\ 
    \quad\quad\mathbb{P}(s=j) &=\frac{1}{5}+\frac{\varepsilon}{4}\quad\text{for}\,\,j\in \{1,2,3,4\}, \label{s=j} 
  \end{align}
where $\varepsilon\in [-\frac{4}{5},\frac{1}{5}]$. 
\end{assumption}

Under Assumption~\ref{Assumption_equal_probability}, since Bob's choice is fixed as $b=0$, the set of the initial arrangements is limited to two possible values as given by  
\begin{align}
&\quad  \setInitial\triangleq \{\re\bl\bl\re\bl,\bl\re\bl\re\bl\}. \label{nonuniform_initial_arrangements}
\end{align}
Corresponding to these initial arrangements, the set of the final arrangements is
\begin{align}
    &\quad \setFinal\triangleq \{\re\bl\bl\re\bl, \bl\re\bl\bl\re,\re\bl\re\bl\bl, \bl\re\bl\re\bl, \bl\bl\re\bl\re\}. \label{nonuniform_final_arrangements}
\end{align}

In Assumption~\ref{Assumption_nonuniform_prob}, we characterize the distribution of the shuffling order $s$, by using the parameter $\varepsilon$. While $\varepsilon$ can take values from the range $[-\frac{4}{5}, \frac{1}{5}]$,  in this section, we are interested in the case where $\varepsilon > 0$. With $\varepsilon > 0$, Assumption~\ref{Assumption_nonuniform_prob} implies that the probability of leaving the deck of cards in the initial state is lower than the probability of choosing to move a card. This assumption allows us to model the typical unconscious behavior of players who tend to do a cut with a non-zero cut index when asked to perform a random cut. The essential part of Assumption~\ref{Assumption_nonuniform_prob} in this paper is~\eqref{s=0}. Although it is possible to generalize~\eqref{s=j} so that the probability values are different, we use the setting with equal probabilities for simplicity of presentation. In our setting, the bias is characterized by the parameter $\varepsilon$. Larger values of $\varepsilon$ that are close to $\frac{1}{5}$ represent more drastic situations. We also note that negative values of $\varepsilon$ are shown to play a role in analysis in Section~\ref{Sec_4}.

\subsection{Confidentiality Analysis Using Conditional Probability}
Since the final arrangement of cards after shuffling is known, the security issue is whether this information can be used to infer about the initial arrangement. Conditional probability provides a framework directly related to this inference. To show that the Five Card Trick preserves confidentiality, we compute the conditional probability $\mathbb{P}(\ci=I \mid \cf=F)$ for a given initial arrangement $I\in \setInitial$ and the observed final arrangement $F \in \setFinal$. 

For example, consider the scenario that Alice chooses $a=1$ and Bob chooses $b=0$. The initial arrangement of cards will be $\re\bl\bl\re\bl$. Without loss of generality, let's further assume that after the shuffle, we have the arrangement $\bl\re\bl\re\bl$. We consider the situation that Bob wants to know Alice's choice. Since the final arrangement is known to Bob and Bob knows that he chose $b=0$ (Assumption\ref{Assumption_equal_probability}), the initial arrangement of cards must be either $\re\bl\bl\re\bl$ (indicating $a=1$) or $\bl\re\bl\re\bl$ (indicating $a=0$). In this case, we may be interested in calculating $\mathbb{P}(\ci=\re\bl\bl\re\bl \mid \cf=\bl\re\bl\re\bl)$. Here, 
\begin{align*}
\{\ci=\re\bl\bl\re\bl\}=\{\omega\in \Omega \colon \ci(\omega)=\re\bl\bl\re\bl\}
\end{align*}
denotes the event that the initial arrangement is $\re\bl\bl\re\bl$, and furthermore,
\begin{align*}
\{\cf=\bl\re\bl\re\bl\}=\{\omega\in \Omega \colon \cf(\omega)=\bl\re\bl\re\bl\}
\end{align*}
denotes the event that the final arrangement is $\bl\re\bl\re\bl$.

If $\mathbb{P}(\ci=\re\bl\bl\re\bl \mid \cf=\bl\re\bl\re\bl)>0.5$, it means that based on Bob's observation it is more likely that Alice chose $a=1$. If, on the other hand, $\mathbb{P}(\ci=\re\bl\bl\re\bl \mid \cf=\bl\re\bl\re\bl)<0.5$, then it is more likely that Alice chose $a=0$. Finally, if $\mathbb{P}(\ci=\re\bl\bl\re\bl \mid \cf=\bl\re\bl\re\bl)=0.5$, then Bob's observations do not help him guess Alice's bit, since $a=1$ and $a=0$ are equally likely.

We are now ready to present our main result that fully characterizes the conditional probability $\mathbb{P}(\ci=I \mid \cf=F)$. 

\begin{theorem}\label{theorem_cond_prob}
If Assumptions~\ref{Assumption_equal_probability} and \ref{Assumption_nonuniform_prob} both hold, then the conditional probability $\mathbb{P}(\ci=I \mid \cf=F)$ is characterized by the following two cases. \vspace{0.3cm}\newline 
$Case_1$: For $F \in \{f(I,0) : I \in \setInitial\}$ and $I \in \setInitial$,
\begin{align}
  \mathbb{P}(\ci=I \mid \cf=F) =
    \begin{cases}
        \frac{4-20\varepsilon}{8-15\varepsilon}, &\text{if}\,\, F=f(I,0) \vspace{0.3cm}\\
        \frac{4+5\varepsilon}{8-15\varepsilon},  &\text{otherwise.}
    \end{cases} \label{theorem1-result-case1}
\end{align}
$Case_2$: For $F \notin \{f(I,0) : I \in \setInitial\}$ and $I \in \setInitial$,
\begin{align}
      \mathbb{P}(\ci=I \mid \cf=F) = \frac{1}{2}.  \label{theorem1-result-case2}
\end{align}
\end{theorem}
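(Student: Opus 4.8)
The plan is to compute $\mathbb{P}(\ci=I\mid\cf=F)$ directly from Bayes' rule. Under Assumption~\ref{Assumption_equal_probability} the random variable $\ci$ takes only the two values in $\setInitial$, each with prior probability $\tfrac12$, so the $\tfrac12$ factors cancel and
\begin{align*}
\mathbb{P}(\ci=I\mid\cf=F)=\frac{\mathbb{P}(\cf=F\mid\ci=I)}{\sum_{I'\in\setInitial}\mathbb{P}(\cf=F\mid\ci=I')} .
\end{align*}
Thus everything reduces to the likelihoods $\mathbb{P}(\cf=F\mid\ci=I)$. Since $\cf=f(\ci,s)$ and the shuffling order $s$ is independent of the players' bit choices (hence of $\ci$), we have $\mathbb{P}(\cf=F\mid\ci=I)=\sum_{j\,:\,f(I,j)=F}\mathbb{P}(s=j)$, and by Assumption~\ref{Assumption_nonuniform_prob} this value is completely determined once we know which shuffling orders $j$ rotate $I$ into $F$.

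Next I would pin down that correspondence. Write $I_1=\re\bl\bl\re\bl$ and $I_2=\bl\re\bl\re\bl$, so $\setInitial=\{I_1,I_2\}$. A one-line check gives $I_2=f(I_1,3)$, i.e.\ the two admissible initial arrangements are cyclic rotations of each other; since $f(\cdot,j)$ composes additively modulo $5$, this yields $f(I_2,j)=f(I_1,(j+3)\bmod 5)$ for every $j$. Because $5$ is prime and neither string is constant, the five rotations of $I_1$ are pairwise distinct, so $j\mapsto f(I_1,j)$ is a bijection onto $\setFinal$, and likewise for $I_2$. Hence each $F\in\setFinal$ is obtained from $I_1$ by a \emph{unique} order $j_1$ and from $I_2$ by the unique order $j_2=(j_1+2)\bmod 5$. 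In particular $j_1=0\Leftrightarrow j_2=2$ and $j_1=3\Leftrightarrow j_2=0$, so the only final arrangements receiving a ``no-cut'' ($s=0$) contribution are $f(I_1,0)$ and $f(I_2,0)$ — precisely the set $\{f(I,0):I\in\setInitial\}$ in the case split — while for every other $F$ both $j_1$ and $j_2$ are nonzero.

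Finally I would substitute. In $Case_2$ ($F\notin\{f(I,0):I\in\setInitial\}$) both likelihoods equal $\tfrac15+\tfrac{\varepsilon}{4}$, so their ratio is $\tfrac12$, giving \eqref{theorem1-result-case2}. In $Case_1$, exactly one of the two likelihoods equals $\mathbb{P}(s=0)=\tfrac15-\varepsilon$ — namely the one for the $I$ with $F=f(I,0)$ — and the other equals $\tfrac15+\tfrac{\varepsilon}{4}$; their sum is $\tfrac25-\tfrac{3\varepsilon}{4}$, which is positive for $\varepsilon\in[-\tfrac45,\tfrac15]$, so the conditioning is legitimate, and clearing denominators by a factor of $20$ turns the two ratios into $\tfrac{4-20\varepsilon}{8-15\varepsilon}$ (when $F=f(I,0)$) and $\tfrac{4+5\varepsilon}{8-15\varepsilon}$ (otherwise), which is \eqref{theorem1-result-case1}. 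The only genuinely delicate step is the bookkeeping of the second paragraph — tracking which rotation of which initial arrangement produces each final arrangement and verifying that the two cases in the theorem correspond exactly to whether a $s=0$ term is present; once that rotation structure is established, the rest is a direct substitution.
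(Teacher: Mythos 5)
Your proof is correct and follows essentially the same route as the paper: Bayes' rule with the two equally likely initial arrangements, independence of $s$ from $\ci$, and identification of the unique cut index taking each $I\in\setInitial$ to each $F$ (your observation that $I_2=f(I_1,3)$, hence $j_2=(j_1+2)\bmod 5$, is exactly the content of the paper's Lemmas~\ref{Lemma-s-star}--\ref{lemma_total_prob}, just packaged as a single rotation argument). The final substitutions, including the positivity of the normalizer $\tfrac{2}{5}-\tfrac{3\varepsilon}{4}$ on the allowed range of $\varepsilon$, all check out.
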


Proof of Theorem~\ref{theorem_cond_prob} is presented later in Section~\ref{sec:proof}. 

Theorem~\ref{theorem_cond_prob} addresses two different cases, $Case_1$ wherein the two-party computation proves insecure, and $Case_2$ where it remains secure. The final arrangements in $Case_1$ are characterized as $F \in \{f(I,0) : I \in \setInitial\}$. This is the case where $F\in\{\re\bl\bl\re\bl,\bl\re\bl\re\bl\}$. On the other hand, in $Case_2$, $F\in\{\re\bl\re\bl\bl, \bl\re\bl\bl\re, \bl\bl\re\bl\re\}$.

\subsubsection{Discussion on $Case_1$:}
Notice that in $Case_1$, the conditional probability has separate outcomes for two distinct situations that are determined based on the final arrangements of the cards.
In the first situation ($F=f(I,0)$), we are interested in the conditional probability of having a particular initial arrangement $I$, given that the final arrangement is the same as that arrangement (i.e., $F=f(I, 0)=I$). In that situation, if $\varepsilon > 0$, we have 
\begin{align}
    \mathbb{P}(\ci=I \mid \cf=F) = \frac{4-20\varepsilon}{8-15\varepsilon} < \frac{1}{2}. \label{situation-1}
\end{align}
In the second situation ($F\neq f(I,0)$), we are interested in the conditional probability when $F\neq I$. In that situation, if $\varepsilon>0$, we have
\begin{align}
    \mathbb{P}(\ci=I \mid \cf=F) = \frac{4+5\varepsilon}{8-15\varepsilon} > \frac{1}{2}. \label{situation-2}
\end{align}
Notice that \eqref{situation-1} and \eqref{situation-2} imply that in $Case_1$, when $\varepsilon>0$, the conditional probability is always different from $0.5$ (see Fig.~\ref{Theorem1_case1_figure}
for the values of conditional probabilities with respect to different values of $\varepsilon$).  
As a result, the malicious player (Bob) can indeed gain advantage in guessing the other player's choice. 

\begin{figure}[t]
    \centering
    \includegraphics[width=1\linewidth]{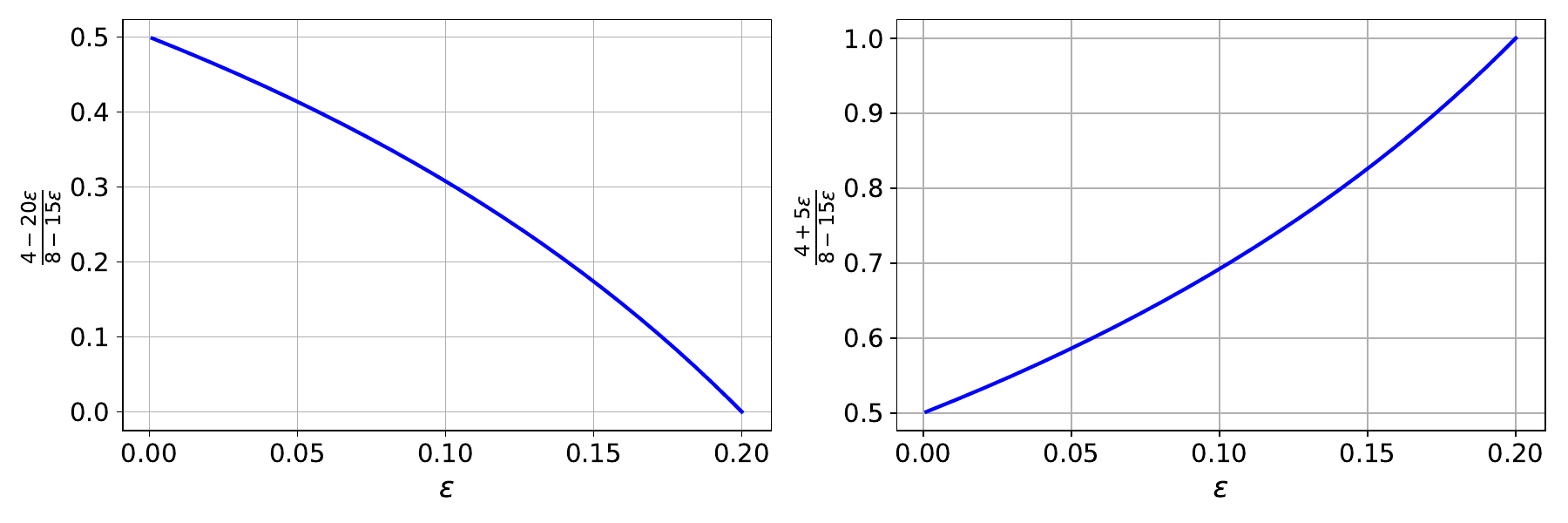}
    \caption{Conditional probabilities in $Case_1$ for nonnegative values of $\varepsilon$}
    \label{Theorem1_case1_figure}
\end{figure}

In particular, for the observed final arrangement $F$, if $F\in\{f(I,0) : I \in \setInitial\}$ (i.e., $Case_1$), Bob can check the values of $\mathbb{P}(\ci=I \mid \cf=F)$ for the two possible initial arrangements $I=\re\bl\bl\re\bl$ and $I=\bl\re\bl\re\bl$ and see which one is larger.

This demonstrates that Bob can make informed guesses about Alice’s choice, relying exclusively on the final arrangement of the cards by exploiting the bias in $\mathbb{P}(s = 0) = \frac{1}{5} - \varepsilon$. As $\varepsilon$ approaches $\frac{1}{5}$, the shuffling player (Alice) becomes progressively less likely to leave the cards unchanged. In the limiting scenario where $\varepsilon = \frac{1}{5}$, the unshuffled scenario no longer occurs. From Bob’s perspective, this enhances exploitable information as the deviation of the conditional probability from $\frac{1}{2}$ becomes more significant. The value of $\varepsilon$ is likely influenced by the behavioral tendencies of Alice performing the shuffle. However, knowing the exact value of $\varepsilon$ is not always possible for Bob.  Depending on the level of information available to Bob about the bias, Bob can be more accurate in his guesses. We illustrate this through the 3 information levels presented below.
\begin{itemize}
    \begin{item} [1)] If Bob knows the existence of a positive parameter $\varepsilon$, but does not know the exact value of it, then he can only know whether $\mathbb{P}(\ci=I \mid \cf=F)$ is larger than or smaller than $0.5$. However, the mere fact that $\varepsilon$ is positive already gives Bob with non-negligible information. \vskip 5pt
    \end{item}
    \begin{item} [2)] If Bob knows a positive lower-bound $\underline{\varepsilon}\in(0,\frac{1}{2}]$ such that $\varepsilon \geq \underline{\varepsilon}$, then Bob can have a better understanding of the conditional probability compared to Level 1 above. In particular, Bob can obtain the bounds $\mathbb{P}(\ci=I \mid \cf=F)\leq \frac{4-20\underline{\varepsilon}}{8-15\underline{\varepsilon}}$ for $F=I$, and $\mathbb{P}(\ci=I \mid \cf=F) \geq \frac{4+5\underline{\varepsilon}}{8-15\underline{\varepsilon}}$ for $F\neq I$.  \vskip 5pt
    \end{item}
     \begin{item} [3)] If Bob knows $\varepsilon$ exactly (e.g., by using data from past observations), then Bob can compute $\mathbb{P}(\ci=I \mid \cf=F)$ exactly. 
    \end{item}
\end{itemize}

\subsubsection{Discussion on $Case_2$:}
$Case_2$ of the Theorem~\ref{theorem_cond_prob} represents the scenario where the two-party computation remains secure. If the final arrangement does not match any of the unshuffled forms (i.e., $F \notin \{f(I,0) : I \in \setInitial\}$), the conditional probability remains exactly $0.5$ for all inputs. In this situation, the malicious player cannot infer any information about the other party’s choice unless they actively manipulate the output space.

To conclude, even though $Case_2$ shows that there is no information leakage, the behavioral tendencies in card shuffling can lead to information leakage and pose security risks in $Case_1$. A malicious player may use this information to threaten the confidentiality of the other player's information, which makes perfectly secure multi-party computation difficult to achieve.

\subsection{Proof of Theorem~\ref{theorem_cond_prob}}
\label{sec:proof}
The proof of Theorem~\ref{theorem_cond_prob} relies on the following three lemmas. Their proofs are presented in the Appendix.
\begin{lemma} \label{Lemma-s-star}  For any $r\in\{0,1,2,3,4\}$ and $I\in \setInitial$, we have
\begin{align}
   \quad\quad \mathbb{P}\left( f(I,s)=f(I,r)\right) = \mathbb{P}(s=r). \label{lem1-result}
\end{align}
\end{lemma}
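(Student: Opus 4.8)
The plan is to prove Lemma~\ref{Lemma-s-star} by showing that the event $\{f(I,s)=f(I,r)\}$ coincides with the event $\{s=r\}$, at least up to a null set, for every $I\in\setInitial$. The inclusion $\{s=r\}\subseteq\{f(I,s)=f(I,r)\}$ is trivial, so the real content is the reverse inclusion: if $f(I,s)=f(I,r)$ then $s=r$. Equivalently, I want to argue that for a fixed $I\in\setInitial$ the map $j\mapsto f(I,j)$ from $\{0,1,2,3,4\}$ to $\mathcal F$ is injective, i.e. the five cyclic rotations of $I$ are pairwise distinct. Once injectivity is established, $f(I,s)=f(I,r)$ forces $s=r$ pointwise on $\Omega$, and taking probabilities of both sides gives \eqref{lem1-result} immediately.

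So the key step is the injectivity check for each of the two arrangements $I=\re\bl\bl\re\bl$ and $I=\bl\re\bl\re\bl$. Concretely, I would list the five cyclic shifts of each string: for $\re\bl\bl\re\bl$ these are $\re\bl\bl\re\bl$, $\bl\re\bl\bl\re$, $\re\bl\re\bl\bl$, $\bl\re\bl\re\bl$, $\bl\bl\re\bl\re$ (corresponding to $j=0,1,2,3,4$ via the definition of $f$), and similarly the five shifts of $\bl\re\bl\re\bl$. In each case one observes that all five strings are distinct — a clean way to see this is to note that a string of length $5$ over a two-letter alphabet has a nontrivial cyclic symmetry only if $5$ divides the number of occurrences of each letter, which is impossible here since each arrangement contains three $\bl$'s and two $\re$'s and $5\nmid 3$. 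Hence the stabilizer of $I$ under cyclic rotation is trivial and the orbit has exactly five elements, giving injectivity of $j\mapsto f(I,j)$.

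I do not anticipate a serious obstacle here; the lemma is essentially a bookkeeping statement. The only point requiring a little care is making the passage from the pointwise set equality $\{f(I,s)=f(I,r)\}=\{s=r\}$ to the probability identity fully rigorous — but this is immediate once one notes that $\omega\mapsto f(I,s(\omega))$ is a well-defined random variable (a deterministic function of the random variable $s$) and that injectivity of $j\mapsto f(I,j)$ means $f(I,s(\omega))=f(I,r)\iff s(\omega)=r$ for every $\omega\in\Omega$. Taking $\mathbb{P}$ of both events then yields $\mathbb{P}(f(I,s)=f(I,r))=\mathbb{P}(s=r)$, which is exactly \eqref{lem1-result}. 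I would present the injectivity argument either by the short divisibility remark above or, if a more elementary exposition is preferred, by simply exhibiting the two tables of five rotations and observing pairwise distinctness directly.
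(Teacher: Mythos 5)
Your proof is correct and follows essentially the same route as the paper's: both reduce the claim to the injectivity of the map $q\mapsto f(I,q)$ on $\{0,1,2,3,4\}$, which forces the event $\{f(I,s)=f(I,r)\}$ to coincide with $\{s=r\}$. Your divisibility argument for why the five cyclic rotations of a length-$5$ string with three $\bl$'s and two $\re$'s are pairwise distinct is a valid (and slightly more explicit) justification of the injectivity that the paper simply asserts.
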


\begin{lemma} \label{Lemma-s-i} Suppose Assumption~\ref{Assumption_nonuniform_prob} holds. Then for any given  $i, j\in \setInitial$, we have
\begin{align}
\quad\quad\mathbb{P}( f(i,s)=f(j,0)) =
\begin{cases}
    \frac{1}{5}-\varepsilon, & \text{if}\,\, i=j,\vspace{0.2cm} \\
    \frac{1}{5}+\frac{\varepsilon}{4}, & \text{if}\,\, i\neq j.
\end{cases} \label{s-i-result}
\end{align}
\end{lemma}

\begin{lemma}\label{lemma_total_prob}
    Suppose Assumption~\ref{Assumption_equal_probability} and ~\ref{Assumption_nonuniform_prob} hold. Then for any given final arrangement $F \in \{f(I,0) : I \in \setInitial\}$, we have
    \begin{align}
        \mathbb{P}(\cf = F) = \frac{1}{2}(\frac{2}{5}-\frac{3\varepsilon}{4}).\label{lemma_total_prob_eq}
    \end{align}
\end{lemma}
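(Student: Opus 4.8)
The plan is to apply the law of total probability, conditioning on the initial arrangement, and then invoke Assumption~\ref{Assumption_equal_probability} together with Lemma~\ref{Lemma-s-i} to evaluate the two resulting terms.

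First I would observe that $\{f(I,0):I\in\setInitial\}=\setInitial$, since $f(I,0)=I$; hence the final arrangement $F$ appearing in the statement is itself one of the two elements of $\setInitial$, say $F=J$ with $J\in\setInitial$. Conditioning on $\ci$,
\begin{align*}
\mathbb{P}(\cf=F)=\sum_{I\in\setInitial}\mathbb{P}(\cf=F\mid\ci=I)\,\mathbb{P}(\ci=I),
\end{align*}
where only the two arrangements in $\setInitial$ contribute, because under Assumption~\ref{Assumption_equal_probability} Bob always picks $b=0$. The same assumption gives $\mathbb{P}(\ci=I)=\tfrac12$ for both $I\in\setInitial$, since Alice picks $a\in\{0,1\}$ with probability $\tfrac12$ each.

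Next, using $\cf=f(\ci,s)$ and the fact that the shuffling order $s$ is drawn independently of the initial arrangement, I would rewrite $\mathbb{P}(\cf=F\mid\ci=I)=\mathbb{P}(f(I,s)=f(J,0))$. Lemma~\ref{Lemma-s-i} evaluates this as $\tfrac15-\varepsilon$ when $I=J$ and as $\tfrac15+\tfrac{\varepsilon}{4}$ when $I\neq J$. Substituting these values together with $\mathbb{P}(\ci=I)=\tfrac12$ into the total-probability sum yields
\begin{align*}
\mathbb{P}(\cf=F)=\frac12\Bigl(\frac15-\varepsilon\Bigr)+\frac12\Bigl(\frac15+\frac{\varepsilon}{4}\Bigr)=\frac12\Bigl(\frac25-\frac{3\varepsilon}{4}\Bigr),
\end{align*}
which is exactly~\eqref{lemma_total_prob_eq}.

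The computation is short, so I do not expect a genuine obstacle. The only points needing care are the bookkeeping that exactly one of the two initial arrangements coincides with $F$ — so that Lemma~\ref{Lemma-s-i} is applied once in each of its two cases — and the (implicit) independence of $s$ and $\ci$ that is used to pass from the conditional probability of $\cf$ to the unconditional probability $\mathbb{P}(f(I,s)=f(J,0))$ addressed by that lemma.
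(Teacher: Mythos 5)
Your proof is correct and follows essentially the same route as the paper: law of total probability over the two initial arrangements in $\setInitial$, independence of $s$ and $\ci$, Assumption~\ref{Assumption_equal_probability} for $\mathbb{P}(\ci=I)=\tfrac12$, and Lemma~\ref{Lemma-s-i} applied once in each of its two cases. The only (cosmetic) difference is that you treat both possibilities for $F$ uniformly via the label $J$, whereas the paper splits explicitly into $F=I$ and $F\neq I$.
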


We are now ready to prove Theorem~\ref{theorem_cond_prob} by using Lemmas~\ref{Lemma-s-star}--\ref{lemma_total_prob}.
\begin{proof}[Theorem~\ref{theorem_cond_prob}]
By the definition of conditional probability and \eqref{def-cf},
\begin{align}
    \mathbb{P}(\ci=I \mid \cf=F) &= \frac{\mathbb{P}(\ci=I,\cf=F)}{\mathbb{P}(\cf=F)} = \frac{\mathbb{P}(\ci=I, f(\ci, s)=F)}{\mathbb{P}(\cf=F)}\nonumber\nonumber \\ 
  &= \frac{\mathbb{P}(\ci=I, f(I, s)=F)}{\mathbb{P}(\cf=F)}. \label{cond-prob-result}
\end{align}

Now, since $\ci$ and $s$ are independent and $\mathbb{P}(\ci=I)=1/2$ (by Assumption~\ref{Assumption_equal_probability}), we obtain from \eqref{cond-prob-result} that 
\begin{align}
    \mathbb{P}(\ci=I \mid \cf=F) &= \frac{\mathbb{P}(\ci=I)\mathbb{P}(f(I, s)=F)}{\mathbb{P}(\cf=F)}=\frac{\mathbb{P}(f(I, s)=F)}{2\mathbb{P}(\cf=F)}. \label{before-case-separation}
\end{align}

Next, we use \eqref{before-case-separation} to prove \eqref{theorem1-result-case1} for $Case_1$ and \eqref{theorem1-result-case2} for $Case_2$. 

$Case_1$: Since $F \in \{f(I,0) : I \in \setInitial\}$, it follows from Lemma~\ref{lemma_total_prob} that $\mathbb{P}(\cf=F)=\frac{1}{2}(\frac{2}{5}-\frac{3\varepsilon}{4})$. Therefore, \eqref{before-case-separation} yields 
\begin{align}
    \mathbb{P}(\ci=I \mid \cf=F) &= \frac{\mathbb{P}(\ci=I)\mathbb{P}(f(I, s)=F)}{\mathbb{P}(\cf=F)}=\frac{\mathbb{P}(f(I, s)=F)}{\frac{2}{5}-\frac{3\varepsilon}{4}}. \label{before-case-separation_case1}
\end{align}

Next we evaluate $\mathbb{P}(f(I, s)=F)$. Since $F \in \{f(I,0) : I \in \setInitial\}$ and $I \in \setInitial$, we have $I,F\in\setInitial$. In other words, both $I$ and $F$ have two possible values, either $\re\bl\bl\re\bl$ or $\bl\re\bl\re\bl$. Given $I\in \setInitial$, let $\widecheck{I} \in \setInitial$ denote the arrangement such that $\{I,\widecheck{I}\}=\setInitial$. We consider two situations: 1) $F=I$ and 2) $F\neq I$. 

If $F=I$, then we have 
\begin{align}
   \mathbb{P}(f(I, s)=F)=\mathbb{P}(f(I, s)=I)=\mathbb{P}(f(I, s)=f(I, 0)). \label{s1-1} 
\end{align}
Here, by Lemma~\ref{Lemma-s-i} with $i=I$, we obtain $\mathbb{P}(f(I, s)=f(I, 0))=1/5-\varepsilon$. Therefore, it follows from \eqref{s1-1} that
\begin{align}
    \mathbb{P}(f(I, s)=F) = \frac{1}{5}-\varepsilon, \label{situtation1-result}
\end{align}
and thus \eqref{before-case-separation_case1} yields 
\begin{align}
   \mathbb{P}(f(I, s)=F)=\frac{\frac{1}{5}-\varepsilon}{\frac{2}{5}-\frac{3\varepsilon}{4}}=\frac{4-20\varepsilon}{8-15\varepsilon}. \label{situation1-result-final}
\end{align}

If $F\neq I$, then it means that $F=\widecheck{I}$. Hence, 
\begin{align}
   \mathbb{P}(f(I, s)=F)=\mathbb{P}(f(I, s)=\widecheck{I})=\mathbb{P}(f(I, s)=f(\widecheck{I}, 0)). \label{s2-1} 
\end{align}
Here, by Lemma~\ref{Lemma-s-i} with $i=I$ and $j=\widecheck{I}$, we obtain $\mathbb{P}(f(I, s)=f(\widecheck{I}, 0))=1/5+\varepsilon/4$. Therefore, it follows from \eqref{s2-1} that
\begin{align}
    \mathbb{P}(f(I, s)=F) = \frac{1}{5}+\frac{\varepsilon}{4}, \label{situtation2-result}
\end{align}
and thus \eqref{before-case-separation_case1} yields 
\begin{align}
   \mathbb{P}(f(I, s)=F)=\frac{\frac{1}{5}+\frac{\varepsilon}{4}}{\frac{2}{5}-\frac{3\varepsilon}{4}}=\frac{4+15\varepsilon}{8-15\varepsilon}.  \label{situation2-result-final}
\end{align}
By combining \eqref{situation1-result-final} for $F=f(I,0)=I$ and \eqref{situation2-result-final} for $F\neq f(I, 0)$, we get \eqref{theorem1-result-case1}. \vskip 3pt

 $Case_2$: Notice that in this case, we have $F \notin \{f(I,0) : I \in \setInitial\}$ and $I \in \setInitial$.
Therefore, $F\in \{\bl\re\bl\bl\re,\re\bl\re\bl\bl, \bl\bl\re\bl\re\}, $ $I\in\{\re\bl\bl\re\bl, \bl\re\bl\re\bl\}$, which implies $F\neq I$. Let $q_{I,F} \in \{1,2,3,4\}$ denote the index such that $F=f(I,q_{I,F})$ (for instance for $I=\re\bl\bl\re\bl$ and $F=\bl\bl\re\bl\re$, we have $q_{I,F}=4$). 

For given $I\in \{\re\bl\bl\re\bl, \bl\re\bl\re\bl\}$, let $\widecheck{I}$ denote the arrangement such that $\{I, \widecheck{I}\}=\{\re\bl\bl\re\bl, \bl\re\bl\re\bl\}$. Notice also that $F\neq \widecheck{I}$. Similarly to $q_{I,F}$, let $q_{\widecheck{I},F} \in \{1,2,3,4\}$ denote the index such that $F=f(\widecheck{I},q_{\widecheck{I},F})$.

By Law of Total Probability, we write
\begin{align}
\mathbb{P}(\cf=F) &= \mathbb{P}(\ci=I, \cf=F)+\mathbb{P}(\ci=\widecheck{I}, \cf=F). \label{case2-sum-of-probabilities}
\end{align}
By using Assumption~\ref{Assumption_equal_probability},  Lemma~\ref{Lemma-s-star}, as well as Assumption~\ref{Assumption_nonuniform_prob}, we have
\begin{align}
    \mathbb{P}(\ci=I, \cf=F)&=\mathbb{P}(\ci=I)\mathbb{P}(f(I,s)=F)=\frac{1}{2}\mathbb{P}(f(I,s)=f(I,q_{I,F})) \nonumber \\
    &=\frac{1}{2}\mathbb{P}(s=q_{I,F})=\frac{1}{2}\Big(\frac{1}{5}+\frac{\varepsilon}{4}\Big). \label{case2-equality1}
\end{align}
Similarly, we can compute 
\begin{align}
    \mathbb{P}(\ci=\widecheck{I}, \cf=F)&=\mathbb{P}(\ci=\widecheck{I})\mathbb{P}(f(\widecheck{I},s)=F)=\frac{1}{2}\mathbb{P}(f(\widecheck{I},s)=f(\widecheck{I},q_{\widecheck{I},F})) \nonumber \\&=\frac{1}{2}\mathbb{P}(s=q_{\widecheck{I},F})=\frac{1}{2}\Big(\frac{1}{5}+\frac{\varepsilon}{4}\Big). \label{case2-equality2}
\end{align}

Notice that \eqref{case2-equality1} and \eqref{case2-equality2} imply  $\mathbb{P}(\ci=I, \cf=F)=\mathbb{P}(\ci=\widecheck{I}, \cf=F)$. Thus, \eqref{case2-sum-of-probabilities}
yields 
\begin{align}
\mathbb{P}(\cf=F) &= 2\mathbb{P}(\ci=I, \cf=F) = 2\mathbb{P}(\ci=I)\mathbb{P}(\cf=F)  \nonumber \\ 
 & = 2\mathbb{P}(\ci=I)\mathbb{P}(f(I, s)=F) = \mathbb{P}(f(I, s)=F). \label{Case2_total_prob}
%& = 2\mathbb{P}(\{\ci=I\} \cap \{\cf=F\}) \\
%    &= 2\mathbb{P}(\{\ci=I\} \cap \{f(\ci, s)=F\})\\
%    &= 2\mathbb{P}(\{\ci=I\} \cap \{f(I, s)=F\}) 
\end{align}
Substituting the identity derived in~\eqref{Case2_total_prob} into \eqref{before-case-separation}, we obtain \eqref{theorem1-result-case2}.
\hfill $\square$ \end{proof}

\section{Repeated Random Cuts for Security}\label{Sec_4}
As we studied in the previous sections, the process of the random cut is closely related to the security aspects of the Five Card Trick. In this section, we investigate \emph{repeated} random cuts as a potential method of improving confidentiality, even if those cuts are biased. Specifically, we use a Markov chain~\cite{diaconis1996cutoff,norris1998markov} to characterize the repeated random cuts. 

\subsection{Characterization of repeated cuts through a Markov chain}
In Section~\ref{Sec_2}, we used the random variable $s\colon\Omega \to \{0,1,2,3,4\}$ to denote shuffling order, i.e., the order of the final arrangement of cards after a cut. To characterize repeated random cuts, we now use a Markov chain $\{r(t)\in\{0,1,2,3,4\}\}_{t\in{\mathbb{N}_0}}$ with initial distribution vector $\nu\in \mathbb{R}^{1\times 5}$ and transition probability matrix $P\in\mathbb{R}^{5\times 5}$. In this characterization, for a given nonnegative integer $t\in \mathbb{N}_0$, the random variable $r(t)\colon\Omega\to\{0,1,2,3,4\}$ denotes the order of the final arrangement of cards after $t$ number of random cuts. To simplify derivations that involve $\nu$ and $P$, we use the notion that the entries of vectors and matrices start from $0$, and thus, we have
\begin{align}
\mathbb{P}(r(0)=i) &=\nu_i, \quad i\in\{0,1,2,3,4\}, \label{markov-1} \\
\mathbb{P}(r(t+1)=j|r(t)=i) &=P_{i,j}, \quad i,j \in\{0,1,2,3,4\} \label{markov-2}.
\end{align}

% $P\in\mathbb{R}^{n\times n}$ with entries $P_{i,i}=a$ for $i\in\{0,1,2,3,4\}$ and $P_{i,j}=b$ for $i,j\in\{0,1,2,3,4\}$ with $i\neq j$

Similar to the setting in Section~\ref{Sec_3}, we want to consider \emph{biased} random cuts, where cutting at the zero cut-index has a different probability than other cut indices. By taking into account the cyclic nature of cuts, the bias is characterized by the transition probability matrix 
\begin{align}
P = \begin{bmatrix}
a & b & b & b & b \\
b & a & b & b & b \\
b & b & a & b & b \\
b & b & b & a & b \\
b & b & b & b & a
\end{bmatrix}, \label{P-matrix}
\end{align}
where $a\in[0,1]$ denotes the probability of cutting at the zero cut-index, and $b=\frac{1-a}{4}$. In the setting of Assumption~\ref{Assumption_nonuniform_prob}, $a=\frac{1}{5}-\varepsilon, b=\frac{1}{5}+\frac{\varepsilon}{4}$. Again, for unintentional bias introduced by being more likely to do a cut from a nonzero index is handled by setting $\varepsilon > 0$. In such a case, we have $a < b$. However, our results in this section also cover the case where $a>b$. 
We note that the zero cut-index depends on the order of the current arrangement of the cards. As a result, the probability $a$ appears on the diagonal, since starting from the $i$th arrangement order, a cut with zero cut-index results again in the same arrangement order $i$.  

Furthermore, the initial distribution vector $\nu$ is set as 
\begin{align}
\nu = \begin{bmatrix}
1 & 0 & 0 & 0 & 0 
\end{bmatrix}, \label{nu-vector}
\end{align}
so that the initial order of arrangement is 0, that is, $r(0)=0$. 

The following lemma provides the probabilities regarding the possible orders of arrangements of cards after $t$ number of random cuts. Its proof relies on the eigenstructure of the matrix $P$ in \eqref{P-matrix}.

\begin{lemma} 
\label{Lemma-rt}
The Markov chain $\{r(t)\}_{t\in\mathbb{N}_0}$ with transition probability matrix $P$ in \eqref{P-matrix} and initial distribution vector $\nu$ in \eqref{nu-vector} satisfies
\begin{align}
\mathbb{P}(r(t)=i)=\begin{cases}
\frac{1}{5}+ \frac{4}{5}(a - b)^t, &\quad i=0,\\
\frac{1}{5}- \frac{1}{5}(a - b)^t, &\quad i\in\{1,2,3,4\}.
\end{cases} \label{pst-i-result}
\end{align}
\end{lemma}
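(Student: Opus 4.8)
The plan is to diagonalize the transition matrix $P$ in \eqref{P-matrix} and read off $\mathbb{P}(r(t)=i)$ from the $t$-th power $P^t$ applied to the initial distribution $\nu$. First I would observe that $P$ has the structure $P = (a-b)\,\mathrm{Id} + b\,J$, where $\mathrm{Id}$ is the $5\times5$ identity and $J$ is the all-ones matrix. Since $J$ has eigenvalue $5$ on the all-ones vector $\mathbf{1}$ and eigenvalue $0$ on the $4$-dimensional subspace $\{x : \sum_i x_i = 0\}$, the matrix $P$ has eigenvalue $\lambda_1 = (a-b) + 5b = a + 4b = 1$ (using $b = \frac{1-a}{4}$) with eigenvector $\mathbf{1}$, and eigenvalue $\lambda_2 = a - b$ with multiplicity $4$ on the zero-sum subspace. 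Consequently $P^t = \frac{1}{5}J + (a-b)^t\bigl(\mathrm{Id} - \tfrac{1}{5}J\bigr)$, which can also be checked directly by induction using $J^2 = 5J$ and $J\,\mathrm{Id} = J$.

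Next I would compute $\nu P^t$. Since $\nu = \begin{bmatrix}1 & 0 & 0 & 0 & 0\end{bmatrix} = e_0^{\top}$, we have $\nu P^t = e_0^{\top}P^t$, i.e. the $0$-th row of $P^t$. From $P^t = \frac{1}{5}J + (a-b)^t(\mathrm{Id} - \frac{1}{5}J)$, the $0$-th row of $\frac{1}{5}J$ is $(\frac{1}{5},\dots,\frac{1}{5})$, the $0$-th row of $\mathrm{Id}$ is $e_0^{\top}$, and the $0$-th row of $\frac{1}{5}J$ is again $(\frac{1}{5},\dots,\frac{1}{5})$. Adding these with the stated coefficients gives, in entry $0$, $\frac{1}{5} + (a-b)^t(1 - \frac{1}{5}) = \frac{1}{5} + \frac{4}{5}(a-b)^t$, and in each entry $i \in \{1,2,3,4\}$, $\frac{1}{5} + (a-b)^t(0 - \frac{1}{5}) = \frac{1}{5} - \frac{1}{5}(a-b)^t$. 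By \eqref{markov-1}, \eqref{markov-2} and the standard fact that the distribution of $r(t)$ is $\nu P^t$, this is exactly \eqref{pst-i-result}.

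The only genuinely delicate points are bookkeeping ones rather than conceptual obstacles: confirming that $a + 4b = 1$ so that $\mathbf{1}$ is a genuine stochastic eigenvector with eigenvalue $1$ (this is where $b = \frac{1-a}{4}$ enters), and being careful with the $0$-indexing convention adopted in \eqref{markov-1}--\eqref{markov-2} so that ``row $0$'' of $P^t$ is the correct object to extract. I would also remark that the closed form $P^t = \frac{1}{5}J + (a-b)^t(\mathrm{Id}-\frac{1}{5}J)$ can be verified without invoking eigenvalues at all, by a one-line induction, which makes the argument fully self-contained; either route is short. If a cleaner exposition is wanted, I would present the induction first and mention the eigenstructure as the conceptual reason, matching the remark in the excerpt that the proof ``relies on the eigenstructure of the matrix $P$.''
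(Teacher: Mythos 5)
Your proof is correct and follows essentially the same route as the paper: both start from the decomposition $P=(a-b)I+bJ$ and exploit the eigenstructure (eigenvalue $a+4b=1$ on $\mathbf{1}$, eigenvalue $a-b$ on the zero-sum subspace) to obtain $P^t$ and then read off row $0$. The only cosmetic difference is that you package $P^t$ as $\tfrac{1}{5}J+(a-b)^t(I-\tfrac{1}{5}J)$ via spectral projections (or induction) rather than explicitly forming the eigenvector matrix $T$ and computing $T\,\mathrm{diag}(\lambda_i^t)\,T^{-1}$ as the paper does; the resulting entrywise formula is identical.
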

\begin{proof} First, it follows from \eqref{markov-1} and \eqref{markov-2} that 
\begin{align}
    \mathbb{P}(r(t)=i) = (\nu P^t)_i, \quad i\in\{0,1,2,3,4\}. \label{pst-i}
\end{align}
To evaluate \eqref{pst-i}, we need to compute $P^t$. To this end, we first analyze the eigenstructure of $P$. Note that $P$ can be written as
\begin{align}
    P = (a - b)I + bJ, \label{PIJ}
\end{align}
where $I\in \mathbb{R}^{5\times5}$ is the identity matrix and $J\in\mathbb{R}^{5\times5}$ is the matrix with all entries equal to $1$. Since the eigenvalues of the matrix $J$ are $l_1=0$ and $l_2=l_3=l_4=l_5=5$, it follows from \eqref{PIJ} that the eigenvalues of $P$ can be computed using the identity $\lambda_i=(a-b)+bl_i$ as 
\begin{align}
   \lambda_1 = a + 4b, \quad \lambda_2 = \lambda_3 = \lambda_4 = \lambda_5 = a - b. 
\end{align}
The right-eigenvectors corresponding to these eigenvalues are
\begin{align}
    v_1 = \begin{bmatrix} 1 \\ 1 \\ 1 \\ 1 \\ 1 \end{bmatrix}, \quad
v_2 = \begin{bmatrix} -1 \\ \phantom{-}1 \\ \phantom{-}0 \\ \phantom{-}0 \\ \phantom{-}0 \end{bmatrix}, \quad
v_3 = \begin{bmatrix} -1 \\ \phantom{-}0 \\ \phantom{-}1 \\ \phantom{-}0 \\ \phantom{-}0 \end{bmatrix}, \quad
v_4 = \begin{bmatrix} -1 \\ \phantom{-}0 \\ \phantom{-}0 \\ \phantom{-}1 \\ \phantom{-}0 \end{bmatrix}, \quad
v_5 = \begin{bmatrix} -1 \\ \phantom{-}0 \\ \phantom{-}0 \\ \phantom{-}0 \\ \phantom{-}1 \end{bmatrix}. \label{right_eigen}
\end{align}
Consider the matrix $T\in \mathbb{R}^{5\times 5}$ formed as $T=[v_1,\,v_2,\,v_3,\,v_4,\,v_5]$. Noting that the eigenvectors $v_i$ are linearly independent (and thus $P$ is diagonalizable), it follows by similarity transformation that $T^{-1}PT=\mathrm{diag}(\lambda_1, \lambda_2, \lambda_3, \lambda_4, \lambda_5)$. Therefore, $P=T\mathrm{diag}(\lambda_1, \lambda_2, \lambda_3, \lambda_4, \lambda_5)T^{-1}$ and consequently, 
\begin{align}
    P^t &= T\mathrm{diag}(\lambda_1^t, \lambda_2^t, \lambda_3^t, \lambda_4^t, \lambda_5^t)T^{-1}. \label{Pt}
\end{align}
It follows from \eqref{Pt} by direct computation that
\begin{align}
    (P^t)_{i,j} =
    \begin{cases}
    \frac{1}{5}(a + 4b)^t + \frac{4}{5}(a - b)^t, & \text{if } i = j, \\
    \frac{1}{5}(a + 4b)^t - \frac{1}{5}(a - b)^t, & \text{if } i \neq j.
    \end{cases} \label{Ptij}
\end{align}
Noting that $a+4b=1$, we use \eqref{nu-vector}, \eqref{pst-i}, and \eqref{Ptij} to obtain \eqref{pst-i-result}. \hfill $\square$
\end{proof}

\subsection{Tight Bound on Required Number of Shuffles}
In what follows, we obtain tight bounds on the number of shuffles (i.e., the number of cuts) that is required to keep confidentiality at a desired level. As a first result, we compute the conditional probability $\mathbb{P}(\ci=I\mid\cf=F)$ for different values of the number of shuffles.

\begin{theorem} 
\label{Theorem-repeated}
Suppose Assumption~\ref{Assumption_equal_probability} holds.
    After $T\in \mathbb{N}_0$ number of repeated biased shuffles characterized through the Markov chain $\{r(t)\}_{t\in \mathbb N_0}$, the conditional probability $\mathbb{P}(\ci=I\mid \cf =F)$ is characterized as follows. \vspace{0.3cm}\newline 
    $Case_1$: For $F \in \{f(I,0) : I \in \setInitial\}$ and $I \in \setInitial$,
\begin{align}
  \mathbb{P}(\ci=I \mid \cf=F) =
    \begin{cases}
        \frac{4 + 16(a-b)^T}{8+12(a-b)^T}, &\text{if}\,\, F=f(I,0) \vspace{0.3cm}\\
        \frac{4-4(a-b)^T}{8+12(a-b)^T},  &\text{otherwise.}
    \end{cases} \label{theorem2-result-case1}
\end{align}
$Case_2$: For $F \notin \{f(I,0) : I \in \setInitial\}$ and $I \in \setInitial$,
\begin{align}
      \mathbb{P}(\ci=I \mid \cf=F) = \frac{1}{2}.  \label{theorem2-result-case2}
\end{align}
\end{theorem}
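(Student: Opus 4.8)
The plan is to mirror the structure of the proof of Theorem~\ref{theorem_cond_prob}, replacing the single-shuffle distribution of $s$ from Assumption~\ref{Assumption_nonuniform_prob} with the $T$-step distribution of the Markov chain $r(T)$ provided by Lemma~\ref{Lemma-rt}. First I would establish the analogue of the decomposition \eqref{before-case-separation}: since $\ci$ is independent of the shuffling process $\{r(t)\}$ and $\mathbb{P}(\ci=I)=1/2$ by Assumption~\ref{Assumption_equal_probability}, the same computation that led to \eqref{cond-prob-result}--\eqref{before-case-separation} gives $\mathbb{P}(\ci=I\mid\cf=F)=\mathbb{P}(f(I,r(T))=F)\,/\,(2\mathbb{P}(\cf=F))$, where now $\cf=f(\ci,r(T))$.

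Next I would compute the two ingredients $\mathbb{P}(f(I,r(T))=F)$ and $\mathbb{P}(\cf=F)$ using Lemma~\ref{Lemma-rt}. The key observation, already used in Lemma~\ref{Lemma-s-star} and Lemma~\ref{Lemma-s-i}, is that for a fixed $I\in\setInitial$ the map $r\mapsto f(I,r)$ is injective, so $\{f(I,r(T))=f(I,r)\}=\{r(T)=r\}$; in particular $\mathbb{P}(f(I,r(T))=I)=\mathbb{P}(r(T)=0)=\frac15+\frac45(a-b)^T$, and for the other element $\widecheck I$ of $\setInitial$ (recall $\setInitial=\{I,\widecheck I\}$) there is a unique $q\in\{1,2,3,4\}$ with $f(I,q)=\widecheck I$, so $\mathbb{P}(f(I,r(T))=\widecheck I)=\mathbb{P}(r(T)=q)=\frac15-\frac15(a-b)^T$. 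For $Case_1$, $F\in\setInitial$, so either $F=I$ (numerator $\frac15+\frac45(a-b)^T$) or $F=\widecheck I$ (numerator $\frac15-\frac15(a-b)^T$); and by the law of total probability over $\ci\in\{I,\widecheck I\}$, $\mathbb{P}(\cf=F)=\frac12\big[\mathbb{P}(f(I,r(T))=F)+\mathbb{P}(f(\widecheck I,r(T))=F)\big]=\frac12\big[(\frac15+\frac45(a-b)^T)+(\frac15-\frac15(a-b)^T)\big]=\frac12(\frac25+\frac35(a-b)^T)$, since exactly one of the two initial arrangements maps to $F$ by the zero cut-index and the other maps to $F$ by a nonzero index. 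Dividing and clearing the factor $\frac15$ (equivalently multiplying numerator and denominator by $20$) yields exactly \eqref{theorem2-result-case1}.

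For $Case_2$, $F\notin\setInitial$, hence $F\neq I$ and $F\neq\widecheck I$, so the unique cut indices $q_{I,F}$ and $q_{\widecheck I,F}$ achieving $f(I,q_{I,F})=F$ and $f(\widecheck I,q_{\widecheck I,F})=F$ both lie in $\{1,2,3,4\}$. Then $\mathbb{P}(f(I,r(T))=F)=\mathbb{P}(r(T)=q_{I,F})=\frac15-\frac15(a-b)^T=\mathbb{P}(r(T)=q_{\widecheck I,F})=\mathbb{P}(f(\widecheck I,r(T))=F)$, exactly as in \eqref{case2-equality1}--\eqref{case2-equality2}. Therefore $\mathbb{P}(\cf=F)=2\mathbb{P}(\ci=I,\cf=F)=\mathbb{P}(f(I,r(T))=F)$, and substituting into the decomposition from the first paragraph gives $\mathbb{P}(\ci=I\mid\cf=F)=\frac12$, which is \eqref{theorem2-result-case2}.

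I do not expect a serious obstacle here, since the argument is a step-by-step transcription of the proof of Theorem~\ref{theorem_cond_prob} with Lemma~\ref{Lemma-rt} in place of Assumption~\ref{Assumption_nonuniform_prob}; the only points needing care are (i) verifying that $\ci$ and the full chain $\{r(t)\}_{t\in\mathbb N_0}$ are independent so that $\mathbb{P}(\ci=I,f(I,r(T))=F)=\frac12\mathbb{P}(r(T)=q)$, and (ii) keeping track of which of the two initial arrangements in $\setInitial$ reaches a given $F\in\setInitial$ via the zero index versus a nonzero index, so that the combinatorial bookkeeping in $\mathbb{P}(\cf=F)$ comes out right. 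Both are already implicitly handled in the $T=1$ case via Lemmas~\ref{Lemma-s-star}--\ref{lemma_total_prob}, so the same reasoning applies verbatim with $(a-b)^T$ replacing the role played by $-\frac54\varepsilon$ (note $a-b=-\frac54\varepsilon$ under Assumption~\ref{Assumption_nonuniform_prob}), which is also a useful consistency check: setting $T=1$ in \eqref{theorem2-result-case1} recovers \eqref{theorem1-result-case1}.
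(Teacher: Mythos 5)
Your proof is correct and in substance identical to the paper's: the paper simply packages the argument as a one-line reduction, setting $s\triangleq r(T)$, observing via Lemma~\ref{Lemma-rt} that Assumption~\ref{Assumption_nonuniform_prob} then holds with $\varepsilon=-\frac{4}{5}(a-b)^T$ (exactly the substitution $a-b=-\frac{5}{4}\varepsilon$ you flag as a consistency check), and invoking Theorem~\ref{theorem_cond_prob}. Your step-by-step transcription reproduces the same computations and yields the same expressions, so no further comment is needed.
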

\begin{proof}
Let $s\triangleq r(T)$. By Lemma~\ref{Lemma-rt}, Assumption~\ref{Assumption_nonuniform_prob} holds with 
    \begin{align}
        \varepsilon=- \frac{4}{5}(a - b)^T.
    \end{align}
The result then follows from Theorem~\ref{theorem_cond_prob} with this $\varepsilon$ value. \hfill $\square$\end{proof}

As shown in the proof Theorem~\ref{Theorem-repeated}, under repeated shuffling, Assumption~\ref{Assumption_nonuniform_prob} holds with $\varepsilon=- \frac{4}{5}(a - b)^T$.

\begin{remark}[Parity of the number of shuffles] Theorem~\ref{Theorem-repeated} indicates that the conditional probability $\mathbb{P}(\ci=I\mid\cf=F)$ in $Case_1$ depends on the number of shuffles $T$. In \eqref{theorem2-result-case1}, we observe that if $a < b$ (as in the case of unintentional bias in random cut discussed earlier), then $\frac{4 + 16(a-b)^T}{8+12(a-b)^T} < \frac{4-4(a-b)^T}{8+12(a-b)^T}$ if $T$ is odd, and moreover, $\frac{4 + 16(a-b)^T}{8+12(a-b)^T} > \frac{4-4(a-b)^T}{8+12(a-b)^T}$ if $T$ is even. This means that the conditional probabilities in both situations $F=f(I,0)$ and $F\neq f(I,0)$ oscillate around $\frac{1}{2}$ depending on the parity $T$. Thus, if Bob does not know the parity of $T$, it confuses Bob in inferring Alice's choice. This is because in $Case_1$, Bob cannot figure out whether $\mathbb{P}(\ci=I \mid \cf=F) > \frac{1}{2}$ or $\mathbb{P}(\ci=I \mid \cf=F) < \frac{1}{2}$, since either can be true depending on the parity of $T$. \hfill $\triangleleft$
\end{remark}

If Bob knows the true value of $T$, then he would be able to compute  $\mathbb{P}(\ci=I \mid \cf=F)$. To ensure confidentiality, Alice needs to shuffle more times so that $\mathbb{P}(\ci=I \mid \cf=F)$ is close to $\frac{1}{2}$. In the corollary below, we provide lower bounds on the number of shuffles which guarantee that $\mathbb{P}(\ci=I\mid \cf =F)$ is sufficiently close to $\frac{1}{2}$ so as to prevent Bob from guessing Alice's choice. 

\begin{corollary}\label{corollary_bound}Suppose Assumption~\ref{Assumption_equal_probability} holds and biased shuffles characterized through the Markov chain $\{r(t)\}_{t\in \mathbb N_0}$ are repeated $T\in \mathbb{N}_0$ number of times. Then the conditional probability $\mathbb{P}(\ci=I\mid \cf =F)$ satisfies 
\begin{align}
    \left|\mathbb{P}(\ci=I\mid \cf =F) - \frac{1}{2}\right| \leq C. \label{C-result}
\end{align}
with $C\in (0, \frac{1}{2})$, if one of the following conditions hold. \vskip 5pt
Condition 1) Either $T$ is even or $a>b$, and $T \geq \ln(16C/(20-24C))/\ln|a-b|$. \vskip 5pt
Condition 2) $T$ is odd and $a < b$, and  $T \geq \ln(16C/(20+24C))/\ln|a-b|$. 
\end{corollary}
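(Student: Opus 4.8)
The plan is to deduce the corollary from Theorem~\ref{Theorem-repeated} by an elementary algebraic analysis of the two expressions in~\eqref{theorem2-result-case1}. First I would dispose of $Case_2$: by~\eqref{theorem2-result-case2} the conditional probability equals $\tfrac12$ exactly, so $\bigl|\mathbb{P}(\ci=I\mid\cf=F)-\tfrac12\bigr|=0\le C$ for every $T$, using $C>0$. Hence it suffices to establish~\eqref{C-result} in $Case_1$, and throughout we may assume $|a-b|<1$ (the case $a=1$, for which $|a-b|=1$ and the stated bounds are undefined, is excluded).

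For $Case_1$, introduce the shorthand $x\triangleq(a-b)^{T}$. Subtracting $\tfrac12$ from the two branches of~\eqref{theorem2-result-case1} and simplifying gives
\begin{align*}
\mathbb{P}(\ci=I\mid\cf=F)-\tfrac12 &= \frac{10x}{8+12x} && (F=f(I,0)),\\
\mathbb{P}(\ci=I\mid\cf=F)-\tfrac12 &= \frac{-10x}{8+12x} && (F\neq f(I,0)),
\end{align*}
so in both situations $\bigl|\mathbb{P}(\ci=I\mid\cf=F)-\tfrac12\bigr|=\dfrac{10\,|a-b|^{T}}{|8+12x|}$. I would also record the elementary facts that if $a<b$ then $a-b=(5a-1)/4\in(-\tfrac14,0)$, so $|x|\le|a-b|<\tfrac14$, whereas if $a>b$ then $x=|a-b|^{T}\ge0$; consequently $8+12x>0$ in every case that can occur. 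The next step is to split according to the sign of $x$, which is exactly the dichotomy between Conditions~1 and~2.

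Under Condition~1 ($T$ even or $a>b$) we have $x=|a-b|^{T}\ge0$, so the deviation equals $g(y)$ with $y=|a-b|^{T}$ and $g(y)=\tfrac{10y}{8+12y}$; since $g$ is strictly increasing on $[0,\infty)$, the bound $g(y)\le C$ is equivalent, after clearing the positive denominator and using $10-12C>0$ (valid because $C<\tfrac12$), to $y\le\tfrac{8C}{10-12C}=\tfrac{16C}{20-24C}$. Because $C\in(0,\tfrac12)$ this threshold lies in $(0,1)$; taking logarithms and dividing by $\ln|a-b|<0$ (which reverses the inequality) turns $|a-b|^{T}\le\tfrac{16C}{20-24C}$ into $T\ge\ln(16C/(20-24C))/\ln|a-b|$, i.e.\ Condition~1. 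Under Condition~2 ($T$ odd and $a<b$) we have $x=-|a-b|^{T}<0$, hence $8+12x=8-12|a-b|^{T}$, which is strictly positive since $|a-b|^{T}\le|a-b|<\tfrac14$; the deviation then equals $h(y)=\tfrac{10y}{8-12y}$ with $y=|a-b|^{T}$, and the same clearing-of-denominators argument (here $10+12C>0$ trivially) shows $h(y)\le C$ is equivalent to $y\le\tfrac{8C}{10+12C}=\tfrac{16C}{20+24C}\in(0,1)$, which after taking logarithms becomes $T\ge\ln(16C/(20+24C))/\ln|a-b|$, i.e.\ Condition~2.

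I do not expect a genuine obstacle here: the argument is bookkeeping resting on Theorem~\ref{Theorem-repeated}. The two points that need a little care are (i) checking that the denominators $8\pm12|a-b|^{T}$ stay strictly positive — in the $a<b$ case this uses the a-priori estimate $|a-b|<\tfrac14$, and in the final step it uses $\ln|a-b|<0$ so the inequality flips in the right direction — and (ii) verifying $\tfrac{16C}{20-24C}<1$, which is precisely where the hypothesis $C<\tfrac12$ is genuinely used; without it the right-hand side of $|a-b|^{T}\le\tfrac{16C}{20-24C}$ would be at least $1$ and the resulting lower bound on $T$ would no longer be meaningful.
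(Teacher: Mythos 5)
Your proposal is correct and follows essentially the same route as the paper's proof: dispose of $Case_2$ trivially, write the $Case_1$ deviation as $\tfrac{10|a-b|^{T}}{|8+12(a-b)^{T}|}$ (the paper's $\tfrac{20|a-b|^{T}}{|16+24(a-b)^{T}|}$ up to a common factor of $2$), verify the denominator stays positive via $(a-b)\ge-\tfrac14$, split on the sign of $(a-b)^{T}$, and take logarithms using $\ln|a-b|<0$. Your additional observations—that the inequalities are in fact equivalences and that $C<\tfrac12$ is exactly what places the threshold $\tfrac{16C}{20-24C}$ in $(0,1)$—are correct refinements but do not change the argument.
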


Corollary~\ref{corollary_bound} indicates that if Alice repeats the shuffles sufficiently many times, then it becomes harder for Bob to infer Alice's choice. Proof of Corollary~\ref{corollary_bound} is given in the Appendix.

    % $\frac{20(a-b)^t}{16+24(a-b)^t}  \leq C$
    % $-0(a-b)^t\leq C (16+24(a-b)^t)$ 
    % $20(a-b)^t\leq C 16+24C(a-b)^t$
    % $(a-b)^t (20- 24C)\leq 16C$
    % $(a-b)^t \leq \frac{16C}{(20- 24C)}$
    % $t \log |a-b| \leq \log(\frac{16C}{20- 24C})$
    % $t  \geq  \log(\frac{16C}{20- 24C}) / \log |a-b|$

\section{A More General Bias-Setting: Malicious Shuffling} \label{Sec_Generalization}
For the simplicity of the analysis, we limited our attention to the situation where the value of $\varepsilon$ is positive, which represents the bias caused by the tendencies of Alice's shuffling. However, this limited setting can be further generalized to handle the cases where Bob is a malicious player and tries to make a certain order of cards after the cut more likely.
To reflect such a scenario, we can allow the value of a particular shuffling order $s^*$ to have a probability larger than the rest of other orders by setting $\varepsilon$ to be negative. More specifically, Assumption~\ref{Assumption_nonuniform_prob} can be generalized as follows.

\begin{assumption}\label{Assumption_malicious} The shuffling order satisfies
 \begin{align}
      &\mathbb{P}(s=s^*) =\frac{1}{5}-\varepsilon,\\ 
    &\mathbb{P}(s=j) =\frac{1}{5}+\frac{\varepsilon}{4}\quad\text{for}\,\,j\in \{0,1,2,3,4\}\setminus \{s^*\}, 
  \end{align}
where $\varepsilon\in [-\frac{4}{5},\frac{1}{5}]$ and $s^*\in\{0,1,2,3,4\}$. 
\end{assumption}

Notice that under Assumption~\ref{Assumption_malicious} with $\varepsilon<0$, the shuffling order $s^*$ will have a probability larger than $1/5$, and thus it will be more likely to see this order after shuffling. In this case, the results of Theorem~\ref{theorem_cond_prob} and \ref{Theorem-repeated} can be generalized. For instance, $Case_1$ and $Case_2$ in Theorem~\ref{theorem_cond_prob} can be generalized as 

$Case_1$: For $F \in \{f(I,s^*) : I \in \setInitial\}$ and $I \in \setInitial$,
\begin{align}
  \mathbb{P}(\ci=I \mid \cf=F) =
    \begin{cases}
        \frac{4-20\varepsilon}{8-15\varepsilon}, &\text{if}\,\, F=f(I,s^*) \vspace{0.3cm}\\
        \frac{4+5\varepsilon}{8-15\varepsilon},  &\text{otherwise.}
    \end{cases} 
\end{align}

$Case_2$: For $F \notin \{f(I,s^*) : I \in \setInitial\}$ and $I \in \setInitial$, $\mathbb{P}(\ci=I \mid \cf=F) = \frac{1}{2}$

We note that $Case_1$ and $Case_2$ in Theorem~\ref{Theorem-repeated} can be generalized similarly. Furthermore, similar to the case of a simple random cut, bias in other cyclic shuffling methods such as Hindu cut can be investigated using our methods. 

\begin{remark}[Limitations of the Markov model] When players follow historical patterns in their cuts or use complicated shuffling methods, Markov model with five states may be insufficient and more states may be needed. However, this may result in state-space explosion, and therefore, another model may suit better.  
\end{remark}

\section{Conclusion}\label{Sec_Conclusion}
In this paper, we studied a potential security issue in the Five Card Trick protocol under the setting where there is bias in shuffling. Using the notion of conditional probabilities, we analyzed the likelihood of information leakage and showed that under specific conditions, the confidentiality of a player's choice cannot be fully guaranteed. Furthermore, we extended our analysis to the setting of repeated shuffles. Using a Markov chain model, we gained an insight that repeated shuffles allow players to secure their inputs. Finally, we obtained a lower bound on the number of shuffles required to achieve a desired level of security.

\section*{Acknowledgements}
This work was supported by JSPS KAKENHI Grant No.~JP23K03913.

\section*{Appendix}
We provide proofs of Lemmas~\ref{Lemma-s-star}--\ref{lemma_total_prob} and Corollary~\ref{corollary_bound}. 

\begin{proof}[Lemma~\ref{Lemma-s-star}]
Given $I\in \setInitial$, we define the function $f_I\colon\{0,1,2,3,4\}\to\setFinal$ by $f_I(q)\triangleq f(I,q)$. The equality in \eqref{lem1-result} follows from the fact that 
\begin{align}
    \mathbb{P}\left( f(I,s)=f(I,r)\right)=\mathbb{P}(f_I(s)=f_I(r))
\end{align} and $f_I$ is a one-to-one function. \hfill $\square$
\end{proof}

\begin{proof}[Lemma~\ref{Lemma-s-i}] If $i=j$, then by Lemma~\ref{Lemma-s-star} with $r=0$, we have 
\begin{align} 
\mathbb{P}( f(i,s)=f(j,0)) & = \mathbb{P}( f(j,s)=f(j,0)) = \mathbb{P}(s=0) = \frac{1}{5}-\varepsilon \label{before-applying-assumption}.
\end{align}
Now, assume that $i\neq j$. In this case, we have two options, 1) $i=\re\bl\bl\re\bl, j=\bl\re\bl\re\bl$ or 2) $i=\bl\re\bl\re\bl, j= \re\bl\bl\re\bl$.\\
In option 1), by \eqref{s=j} in Assumption~\ref{Assumption_nonuniform_prob}, we obtain
\begin{align*} 
\mathbb{P}( f(i,s)=f(j,0)) &= \mathbb{P}( f(\re\bl\bl\re\bl,s)=f(\bl\re\bl\bl\re,0))= \mathbb{P}(s=3) = \frac{1}{5}+\frac{\varepsilon}{4}.
\end{align*}
Similarly, in option 2), by using by Lemma~\ref{Lemma-s-star} and \eqref{s=j} in Assumption~\ref{Assumption_nonuniform_prob}, we get 
\begin{align*} 
\mathbb{P}( f(i,s)=f(j,0)) &= \mathbb{P}( f(\bl\re\bl\re\bl,s) = f(\re\bl\bl\re\bl,0)) = \mathbb{P}(s=2) = \frac{1}{5}+\frac{\varepsilon}{4}.
\end{align*}
It follows from the results of both Options 1 and 2 that if $i\neq j$, then 
\begin{align} 
\mathbb{P}( f(i,s)=f(j,0)) & = \frac{1}{5}+\frac{\varepsilon}{4}. \label{second-part}
\end{align}
Finally,~\eqref{before-applying-assumption} and~\eqref{second-part} imply~\eqref{s-i-result}. \hfill $\square$
\end{proof}

\begin{proof}[Lemma~\ref{lemma_total_prob}]
     Let $I=\re\bl\bl\re\bl$ and $\widecheck{I}=\bl\re\bl\re\bl$. Furthermore, let $J\in \{I, \widecheck{I}\}$ be such that $F = f(J,0)$ and $q_{I,F} \in \{1,2,3,4\}$ denotes the index such that $F=f(I,q_{I,F})$ (for instance for $I=\re\bl\bl\re\bl$ and $F=\bl\bl\re\bl\re$, we have $q_{I,F}=4$). By using Law of Total probability, the equality in \eqref{def-cf}, as well as independence of $\ci$ and $s$, we can expand $\mathbb{P}(\cf = F)$ as
    \begin{align}
    \mathbb{P}(\cf = F) &= \mathbb{P}(\ci=I, \cf=F)+\mathbb{P}(\ci=\widecheck{I}, \cf=F) \nonumber\\
    &= \mathbb{P}(\ci=I, f(\ci, s)=F)+\mathbb{P}(\ci=\widecheck{I}, f(\ci,s)=F) \nonumber\\
     &= \mathbb{P}(\ci=I, f(I, s)=F)+\mathbb{P}(\ci=\widecheck{I}, f(\widecheck{I},s)=F) \nonumber\\
          &= \mathbb{P}(\ci=I)\mathbb{P}(f(I,s)=F)+\mathbb{P}(\ci=\widecheck{I})\mathbb{P}(f(\widecheck{I},s)=F). \label{lemma3-1}
    \end{align}
    
    If $F=I$, we have $F = f(I,0)$. In this case,  \eqref{lemma3-1} implies
    \begin{align}
        \mathbb{P}(\cf = F) &=\mathbb{P}(\ci=I)\mathbb{P}(f(I,s)=f(I,0))+\mathbb{P}(\ci=\widecheck{I})\mathbb{P}(f(\widecheck{I},s)=f(I,0)). \label{lemma3-1-1}
    \end{align}
    By Assumption~\ref{Assumption_equal_probability}, we have $\mathbb{P}(\ci=I)=\mathbb{P}(\ci=\widecheck{I})=1/2$. Furthermore, using Lemma~\ref{Lemma-s-i} with $i=j=I$ we obtain $\mathbb{P}(f(I,s)=f(I,0))=1/5-\varepsilon$. Again by using Lemma~\ref{Lemma-s-i} with $i=\widecheck{I}$ and $j=I$, we get $\mathbb{P}(f(\widecheck{I},s)=f(I,0))=1/5+\varepsilon/4$. Therefore, \eqref{lemma3-1-1} implies 
    \begin{align}
        \mathbb{P}(\cf = F) &=\frac{1}{2}(\frac{1}{5}-\varepsilon)+\frac{1}{2}(\frac{1}{5}+\frac{\varepsilon}{4})= \frac{1}{2}(\frac{2}{5}-\frac{3\varepsilon}{4}). \label{Lemma3_J=I}
    \end{align}
    
The case where $F\neq I$ can be handled similarly. In particular, if $F\neq I$, then  $F=\widecheck{I}=f(\widecheck{I}, 0)$. Thus, following the same steps as before, we obtain
  \begin{align}
        \mathbb{P}(\cf = F) &=\mathbb{P}(\ci=I)\mathbb{P}(f(I,s)=f(\widecheck{I},0))+\mathbb{P}(\ci=\widecheck{I})\mathbb{P}(f(\widecheck{I},s)=f(\widecheck{I},0)). \nonumber \\
        &=\frac{1}{2}(\frac{1}{5}-\varepsilon)+\frac{1}{2}(\frac{1}{5}+\frac{\varepsilon}{4})= \frac{1}{2}(\frac{2}{5}-\frac{3\varepsilon}{4}). \label{Lemma3_J=Ic}
    \end{align}
In conclusion,~\eqref{Lemma3_J=I} and~\eqref{Lemma3_J=Ic} confirm~\eqref{lemma_total_prob_eq}. \hfill $\square$
\end{proof}

\begin{proof}[Corollary~\ref{corollary_bound}]
    Consider $Case_2$ in Theorem~\ref{Theorem-repeated}. Since $\mathbb{P}(\ci=I\mid \cf =F)=\frac{1}{2}$, \eqref{C-result} holds for any $T\in \mathbb{N}_0$ regardless of the sign of $a-b$. 
    Now consider $Case_1$. Since 
    \begin{align}
        \frac{4 + 16(a-b)^T}{8+12(a-b)^T} + \frac{4-4(a-b)^T}{8+12(a-b)^T} = 1,
    \end{align}
we have that
    \begin{align}
        \left|\mathbb{P}(\ci=I\mid \cf =F) - \frac{1}{2}\right| &= \left| \frac{4-4(a-b)^T}{8+12(a-b)^T}  - \frac{1}{2}\right| = \left|\frac{20(a-b)^T}{16+24(a-b)^T} \right| \nonumber \\ 
        &= \frac{20|a-b|^T}{|16+24(a-b)^T|} 
    \end{align}
Since $(a-b)\geq \frac{-1}{4}$ (and thus $(a-b)^T\geq \frac{-1}{4}$), it follows that $16+24(a-b)^T\geq 10 > 0$, and therefore, $|16+24(a-b)^T|=16+24(a-b)^T$. This implies  
       \begin{align}
        \left|\mathbb{P}(\ci=I\mid \cf =F) - \frac{1}{2}\right| &= \frac{20|a-b|^T}{16+24(a-b)^T}. \label{p-diff-eq}
    \end{align}
Consider the case where Condition 1 holds. It means, either $T$ is even or $a>b$ holds. In either case, we have $24(a-b)^T=24|a-b|^T$. Therefore, 
    \begin{align}
        \left|\mathbb{P}(\ci=I\mid \cf =F) - \frac{1}{2}\right| &= \frac{20|a-b|^T}{16+24|a-b|^T}. \label{p-diff-ineq-1}
    \end{align}
Since under Condition 1, we have $T\geq \ln(16C/(20-24C))/\ln|a-b|$, noting that $|a-b|<1$ and $ln|a-b|<0$, we obtain 
    \begin{align}
        T \ln|a-b| \leq \ln(16C/(20-24C)). \nonumber
    \end{align} 
  This implies $|a-b|^T \leq 16C/(20-24C)$, 
    and therefore,
    \begin{align}
        20|a-b|^T \leq C(16 +24|a-b|^T).\label{a-b-ineq}
    \end{align} 
    Using \eqref{a-b-ineq} in \eqref{p-diff-ineq-1}, we obtain \eqref{C-result}.

Next, consider the case where Condition 2 holds. In this case, $T$ is odd $a < b$. This implies that 
$24(a-b)^T=-24|a-b|^T$. Therefore, \eqref{p-diff-eq} yields
    \begin{align}
        \left|\mathbb{P}(\ci=I\mid \cf =F) - \frac{1}{2}\right| &= \frac{20|a-b|^T}{16-24|a-b|^T} \label{p-diff-ineq-2}
    \end{align}
Under Condition 1, we have $T\geq \ln(16C/(20+24C))/\ln|a-b|$, noting that $|a-b|<1$ and $ln|a-b|<0$, we obtain 
    \begin{align}
        T \ln|a-b| \leq \ln(16C/(20+24C)). \nonumber
    \end{align} 
  This implies $|a-b|^T \leq 16C/(20+24C)$, and consequently,
    \begin{align}
        20|a-b|^T \leq C(16 -24|a-b|^T). \label{a-b-ineq-2}
    \end{align} 
    Using \eqref{a-b-ineq-2} in \eqref{p-diff-ineq-2}, we obtain \eqref{C-result}, which completes the proof. \hfill $\square$
\end{proof}

\bibliographystyle{plain}
\bibliography{references}
\end{document}